\newcommand{\doi}[1]{{doi:\href{http://doi.org/#1}{\nolinkurl{#1}}}\rmFullStop}
\newcommand{\eprintlnk}[1]{{\href{#1}{Electronic version}}\rmFullStop}
\renewcommand{\url}[1]{{\href{#1}{\nolinkurl{#1}}}\rmFullStop}
\newcommand*{\rmFullStop}{\rmifnextchar{.}{}{}}
\newcommand{\rmifnextchar}[3]{%
  \begingroup
  \ltx@LocToksA{\endgroup#2}%
  \ltx@LocToksB{\endgroup#3}%
  \ltx@ifnextchar{#1}{%
    \def\next{\the\ltx@LocToksA}%
    \afterassignment\next
    \let\scratch= %
  }{%
    \the\ltx@LocToksB
  }%
}
\title{On the Strongest Three-Valued Paraconsistent Logic Contained in 
       Classical Logic and Its Dual}
\author{C.A. Middelburg}
\institute{Informatics Institute, Faculty of Science, University of
           Amsterdam, \\
           Science Park~904, 1098~XH Amsterdam, the Netherlands \\
           \email{C.A.Middelburg@uva.nl}}
\begin{document}
\maketitle

\begin{abstract}
\LPif\ is a three-valued paraconsistent propositional logic which is 
essentially the same as J3.
It has most properties that have been proposed as desirable properties 
of a reasonable paraconsistent propositional logic.
However, it follows easily from already published results that there are 
exactly 8192 different three-valued paraconsistent propositional logics 
that have the properties concerned.
In this paper, properties concerning the logical equivalence relation of 
a logic are used to distinguish \LPif\ from the others.
As one of the bonuses of focussing on the logical equivalence relation, 
it is found that only 32 of the 8192 logics have a logical equivalence 
relation that satisfies the identity, annihilation, idempotent, and 
commutative laws for conjunction and disjunction.
For most properties of \LPif\ that have been proposed as desirable 
properties of a reasonable paraconsistent propositional logic, its 
paracomplete analogue has a comparable property.  
In this paper, properties concerning the logical equivalence relation of 
a logic are also used to distinguish the paracomplete analogue of 
\LPif\ from the other three-valued paracomplete propositional logics 
with those comparable properties. 
\begin{keywords} 
paraconsistent logic, three-valued logic, logical consequence, logical 
equivalence, paracomplete logic. 
\end{keywords}
\end{abstract}

\section{Introduction}
\label{sect-intro}

A set of propositions is contradictory if there exists a proposition 
such that both that proposition and the negation of that proposition are 
logical consequences of it.
In classical propositional logic, every proposition is a logical 
consequence of every contradictory set of propositions.
In a paraconsistent propositional logic, this is not the case.

\LPif\ is the three-valued paraconsistent propositional logic 
LP~\cite{Pri79a} enriched with an implication connective for which the 
standard deduction theorem holds and a falsity constant. 
This logic, which is essentially the same as J3~\cite{DOt85a}, the 
propositional fragment of CLuNs~\cite{BC04a} without bi-implication, and 
LFI1~\cite{CCM07a}, has most properties that have been proposed as 
desirable properties of a reasonable paraconsistent propositional logic.
However, it follows easily from results presented 
in~\cite{AA15a,AAZ11b,CCM07a} that there are exactly 8192 different 
three-valued paraconsistent propositional logics that have the 
properties concerned.
In this paper, properties concerning the logical equivalence relation of 
a logic are used to distinguish \LPif\ from the others.

It turns out that only 32 of those 8192 logics are logics of which the 
logical equivalence relation satisfies the identity, annihilation, 
idempotent, and commutative laws for conjunction and disjunction; and
only 16 of them are logics of which the logical equivalence relation 
additionally satisfies the double negation law.
\LPif\ is one of those 16 logics.
Two additional classical laws of logical equivalence turn out to be 
sufficient to distinguish \LPif\ completely from the others.

The desirable properties of a reasonable paraconsistent propositional 
logic referred to above concern the logical consequence relation of a 
logic.
It does not follow from those properties that the identity, 
annihilation, idempotent, and commutative laws for conjunction and 
disjunction are satisfied by the logical equivalence relation of the 
logic.
Therefore, if closeness to classical propositional logic is considered 
important, it should be a desirable property of a reasonable 
paraconsistent propositional logic to have a logical equivalence 
relation that satisfies these classical laws of logical equivalence.
This would reduce the potentially interesting three-valued 
paraconsistent propositional logics from 8192 to 32.

The dual notion of paraconsistency is paracompleteness.
A paracomplete propositional logic is a propositional logic in which not 
every disjunction of a proposition and the negation of that proposition 
is a logical consequence of every set of propositions.
I have coined the name \KLif\ for the paracomplete analogue of \LPif.
This logic is considered to be the dual of \LPif.
For most properties of \LPif\ that have been proposed as desirable 
properties of a reasonable paraconsistent propositional logic, \KLif\ 
has a comparable property.  
Those comparable properties are properties that any reasonable 
paracomplete propositional logic should have.
In this paper, properties concerning the logical equivalence relation of 
a logic are also used to distinguish \KLif\ from other three-valued 
paracomplete propositional logics with those comparable properties.

\KLif\ is essentially the same as the propositional fragment of 
LPF~\cite{BCJ84a,Che86a} without the constant that represents the truth 
value that is interpreted as neither true nor false.
That is, LPF has a definedness connective, $\mathrm{\Delta}$, instead of
the implication connective of \KLif, but these connectives can be 
defined in terms of each other (see e.g.\ Section~3.1.2 
of~\cite{Avr91a}).
LPF is well-known in the area of formal methods for software 
development.
It is basic to formal specification and verified design in the software 
development method VDM~\cite{Jon90a}.

In~\cite{BM15b}, a process algebra is presented that allows for dealing 
with contradictory states.
In order to allow for this, the process algebra concerned is built on 
\LPif. 
During the search for a paraconsistent propositional logic on which such 
a process algebra can be built, satisfaction of certain classical laws 
of logical equivalence turned out to be essential.
\LPif\ is one of only four three-valued paraconsistent propositional 
logics with all of the desirable properties referred to above of which 
the logical equivalence relation satisfies the classical laws concerned.
This finding triggered the more elaborate work on the logical 
equivalence relations of three-valued paraconsistent propositional 
logics presented in the current paper.

The structure of this paper is as follows.
First, a survey of the paraconsistent propositional logic 
\LPif\ is given (Section~\ref{sect-LP-iimpl-false}).
Next, the known properties of \LPif\ that have been proposed as 
desirable properties of a reasonable paraconsistent propositional logic 
are discussed (Section~\ref{sect-properties}).
Then, properties concerning the logical equivalence relation of a logic 
are used to distinguish \LPif\ from the other three-valued 
paraconsistent propositional logics with the properties discussed 
earlier (Section~\ref{sect-characterization}).
After that, the logical equivalence relation of \LPif\ is examined 
further and its key role in the algebraization of \LPif\ is shown 
(Section~\ref{sect-logical-equiv}).
Thereafter, \KLif\ is introduced (Section~\ref{sect-dual}) and 
properties of \KLif\ are presented that are comparable to properties of 
\LPif\ that have been presented in the preceding sections 
(Section~\ref{sect-dual-properties}).
Finally, some concluding remarks are made (Section~\ref{sect-concl}).

It is relevant to realize that the work presented in this paper is 
restricted to three-valued paraconsistent propositional logics that are 
\emph{truth-functional} three-valued logics.

There is some overlap between this paper and~\cite{BM15b}.
This paper primarily generalizes and elaborates Section~2 of that paper 
in such a way that it may be of independent importance to the area of 
paraconsistent logics.

\section{The Paraconsistent Logic \LPif}
\label{sect-LP-iimpl-false}

A set of propositions $\Gamma$ is contradictory if there exists a 
proposition $A$ such that both $A$ and $\Not A$ is a logical consequence 
of $\Gamma$.
In classical propositional logic, every proposition is a logical 
consequence of a contradictory set of propositions.
Informally, a paraconsistent propositional logic is a propositional 
logic in which not every proposition is a logical consequence of every 
contradictory set of propositions.

More precisely, a propositional logic $\mathcal{L}$ is a 
\emph{paraconsistent} propositional logic if 
(a)~its logical consequence relation $\LCon_{\mathcal{L}}$ satisfies the 
condition that there exist formulas $A$ and $B$ of $\mathcal{L}$ such 
that $A, \Not A \nLCon_{\mathcal{L}} B$ and 
(b)~its negation connective $\Not$ satisfies the condition that, 
for each propositional variable $p$, both 
$p \nLCon_{\mathcal{L}} \Not p$ and $\Not p \nLCon_{\mathcal{L}} p$.

In~\cite{Pri79a}, Priest proposed the paraconsistent propositional logic
LP (Logic of Paradox).
The logic introduced in this section is LP enriched with a falsity 
constant and an implication connective for which the standard deduction 
theorem holds. 
This logic, called \LPif, is in fact the propositional 
fragment of CLuNs~\cite{BC04a} without bi-implications.

\LPif\ has the following logical constants and connectives:
a falsity constant $\False$,
a unary negation connective $\Not$, 
a binary conjunction connective $\And$, 
a binary disjunction connective $\Or$, and
a binary implication connective $\Impl$.
Truth and bi-implication are defined as abbreviations:
$\True$ stands for $\Not \False$ and
$A \BImpl B$ stands for $(A \Impl B) \And (B \Impl A)$.

A Hilbert-style deductive system for \LPif\ is given in
Table~\ref{proofsystem-LPiimpl}.
\begin{table}[!tb]
\caption{Hilbert-style deductive system for \LPif}
\label{proofsystem-LPiimpl}
\begin{eqntbl}
\begin{eqncol}
\mathbf{Axiom\; Schemas:}
\\
A \Impl (B \Impl A)
\\
(A \Impl (B \Impl C)) \Impl ((A \Impl B) \Impl (A \Impl C))
\\
((A \Impl B) \Impl A) \Impl A
\\
\False \Impl A
\\
(A \And B) \Impl A
\\
(A \And B) \Impl B
\\
A \Impl (B \Impl (A \And B))
\\
A \Impl (A \Or B)
\\
B \Impl (A \Or B)
\\
(A \Impl C) \Impl ((B \Impl C) \Impl ((A \Or B) \Impl C))
\end{eqncol}
\qquad \;\;
\begin{eqncol}
{}
\\
\Not (A \Impl B) \BImpl A \And \Not B
\\
\Not (A \And B) \BImpl \Not A \Or \Not B
\\
\Not (A \Or B) \BImpl \Not A \And \Not B
\\
\Not \Not A \BImpl A
\\
{}
\\
A \Or \Not A
\\
{}
\\
\mathbf{Rule\; of\; Inference:}
\\
\Infrule{A \quad A \Impl B}{B}
\end{eqncol}
\end{eqntbl}
\end{table}
\sloppy
In this table, $A$, $B$, and $C$ are used as meta-variables 
ranging over the set of all formulas of \LPif.
This deductive system is obtained by adding the axiom schema 
$\False \Impl A$ to the Hilbert-style deductive system of Pac given 
in~\cite{Avr91a} on page~288.
The axiom schemas on the left-hand side of the table, except for
$\False \Impl A$, together with the single inference rule (modus 
ponens) constitute a Hilbert-style deductive system for the positive 
fragment of classical propositional logic.
On the right-hand side of the table, the first three axiom schemas allow 
for the negation connective to be moved inwards, the fourth axiom schema
is the double negation axiom schema, and the fifth axiom schema is the 
law of the excluded middle.
The last-mentioned axiom can be thought of as saying that, for every 
proposition, the proposition or its negation is true, while leaving open 
the possibility that both are true.
Replacement of this axiom schema by $(A \Impl \Not A) \Impl \Not A$, 
as in CLuNs~\cite{BC04a}, yields an equivalent deductive system for 
\LPif.
Addition of the axiom schema $\Not A \Impl (A \Impl B)$, which says 
that any proposition follows from a contradiction, yields a 
Hilbert-style deductive system for classical propositional logic 
(see e.g.~\cite{Avr91a}).
The symbol $\Der$ without decoration is used to denote the derivability 
relation induced by the axiom schemas and inference rule of the given
deductive system for \LPif. 

The following outline of the semantics of \LPif\ is based 
on~\cite{Avr91a}.
Like in the case of classical propositional logic, meanings are assigned
to the formulas of \LPif\ by means of valuations.
However, in addition to the two classical truth values $\true$ (true)
and $\false$ (false), a third meaning $\both$ (both true and false) may
be assigned.
A \emph{valuation} for \LPif\ is a function $\nu$ from the set of all 
formulas of \LPif\ to the set $\{\true,\false,\both\}$ such that for all 
formulas $A$ and $B$ of \LPif:
\begin{eqnarray*}
\val{\False}{\nu} & = & \false,
\\
\val{\Not A}{\nu} & = &
 \left \{
 \begin{array}{l@{\;\;}l}
 \true  & \mathrm{if}\; \val{A}{\nu} = \false \\
 \false & \mathrm{if}\; \val{A}{\nu} = \true \\
 \both  & \mathrm{otherwise},
 \end{array}
 \right.
\\
\val{A \And B}{\nu} & = &
 \left \{
 \begin{array}{l@{\;\;}l}
 \true  & \mathrm{if}\; \val{A}{\nu} = \true  \;\mathrm{and}\;
                        \val{B}{\nu} = \true  \\
 \false & \mathrm{if}\; \val{A}{\nu} = \false \;\mathrm{or}\;
                        \val{B}{\nu} = \false \\
 \both  & \mathrm{otherwise},
 \end{array}
 \right.
\\
\val{A \Or B}{\nu} & = &
 \left \{
 \begin{array}{l@{\;\;}l}
 \true  & \mathrm{if}\; \val{A}{\nu} = \true  \;\mathrm{or}\;
                        \val{B}{\nu} = \true  \\
 \false & \mathrm{if}\; \val{A}{\nu} = \false \;\mathrm{and}\;
                        \val{B}{\nu} = \false \\
 \both  & \mathrm{otherwise},
 \end{array}
 \right.
\\
\val{A \Impl B}{\nu} & = &
 \left \{
 \begin{array}{l@{\;\;}l}
 \val{B}{\nu} & \mathrm{if}\; \val{A}{\nu} \in \set{\true,\both} \\
 \true        & \mathrm{otherwise}.
 \end{array}
 \right.
\end{eqnarray*}
The classical truth-conditions and falsehood-conditions for the logical
connectives are retained.
Except for implications, a formula is classified as both-true-and-false
exactly when it cannot be classified as true or false by the classical 
truth-conditions and falsehood-conditions.
Implications deviate in order to satisfy the standard deduction theorem.
The definition of a valuation given above shows that the logical 
connectives of \LPif\ are (three-valued) truth-functional, which means 
that each $n$-ary connective represents a function from 
$\{\true,\false,\both\}^n$ to $\{\true,\false,\both\}$.

For \LPif, the logical consequence relation, denoted by $\LCon$, is 
based on the idea that a valuation $\nu$ satisfies a formula $A$ if 
$\val{A}{\nu} \in \{\true,\both\}$.
It is defined as follows: $\Gamma \LCon A$ iff for every 
valuation $\nu$, either $\val{A'}{\nu} = \false$ for some 
$A' \in \Gamma$ or  $\val{A}{\nu} \in \{\true,\both\}$.
The given Hilbert-style deductive system for \LPif\ is sound and 
strongly complete with respect to the semantics of \LPif, i.e.\ 
$\Gamma \Der A$ iff $\Gamma \LCon A$.
This follows immediately from Theorems~1, 2, and~3 in~\cite{BC04a}.

For all formulas $A$ of \LPif\ in which $\False$ does not occur, for all 
formulas $B$ of \LPif\ in which no propositional variable occurs that 
occurs in $A$,\, $A, \Not A \nLCon B$ if $\nLCon B$ 
(cf. Proposition~4.37 in~\cite{AA15a}).%
\footnote
{On the left-hand side of $\LCon$, we write $A$ for $\set{A}$ and 
 $\Gamma,\Delta$ for $\Gamma \union \Delta$.
 Moreover, we leave out the left-hand side if it is $\emptyset$.
 We also write $\Gamma \nLCon A$ for not $\Gamma \LCon A$.}
Moreover, the connective $\Not$ satisfies the condition that, for each
propositional variable $p$, both $p \nLCon \Not p$ and 
$\Not p \nLCon p$.
Hence, \LPif\ is a paraconsistent logic.

The logical equivalence relation $\LEqv$ of \LPif\ is defined as it is 
defined for classical propositional logic: 
$A \LEqv B$ iff for every valuation $\nu$, 
$\val{A}{\nu} = \val{B}{\nu}$.
Consistency of a formula of \LPif\ is defined as follows: 
$A$ is \emph{consistent} iff for every valuation $\nu$, 
$\val{A}{\nu} \neq \both$.

Unlike in classical propositional logic, it is not the case that 
$A \LEqv B$ iff $\LCon A \BImpl B$.
Take, for example, $p \Or \Not p$ for $A$ and $q \Or \Not q$ for $B$, 
where $p$ and $q$ are different propositional variables. 
Clearly, ${} \LCon p \Or \Not p \BImpl q \Or \Not q$.
Now, let $\nu$ be a valuation such that $\val{p}{\nu} = \true$ and
$\val{q}{\nu} = \both$.
Then $\val{p \Or \Not p}{\nu} = \true$ and 
$\val{q \Or \Not q}{\nu} = \both$ and consequently it is not the case 
that $p \Or \Not p \LEqv q \Or \Not q$.
However, it is easy to check that $A \LEqv B$ only if 
${} \LCon A \BImpl B$.

\section{Known Properties of \LPif}
\label{sect-properties}

In this section, the known properties of \LPif\ that have been proposed 
as desirable properties of a reasonable paraconsistent propositional 
logic are presented.
Each of the properties in question has to do with logical consequence 
relations.
The name CL is used to denote a version of classical propositional logic 
that has the same logical constants and connectives as \LPif.
The symbol $\clLCon$ is used to denote the logical consequence relation 
of CL.

The known properties of \LPif\ that have been proposed as desirable 
properties of a reasonable paraconsistent propositional logic are:
\begin{list}{}
 {\setlength{\leftmargin}{2.4em} \settowidth{\labelwidth}{(b)}}
\item[(a)]
\emph{containment in classical logic}:
${\LCon} \subseteq {\clLCon}$;
\item[(b)]
\emph{proper basic connectives}:
for all sets $\Gamma$ of formulas of \LPif\ and all formulas $A$, $B$, 
and $C$ of \LPif:
\begin{list}{}
 {\setlength{\leftmargin}{2.7em} \settowidth{\labelwidth}{(b$_3$)}}
\item[(b$_1$)]
$\Gamma, A \LCon B$\phantom{${} \And {}$}\phantom{$C$} iff 
$\Gamma \LCon A \Impl B$,
\item[(b$_2$)]
$\Gamma \LCon A \And B$\phantom{,}\phantom{$C$} iff 
$\Gamma \LCon A$ and $\Gamma \LCon B$,
\item[(b$_3$)]
$\Gamma,  A \Or B \LCon C$ iff 
$\Gamma, A \LCon C$ and $\Gamma, B \LCon C$;
\end{list}
\item[(c)]
\emph{weak maximal paraconsistency relative to classical logic}:
for all formulas $A$ of \LPif\ with $\nLCon A$ and $\clLCon A$, for the 
minimal consequence relation $\extLCon$ such that
${\LCon} \subseteq {\extLCon}$ and $\extLCon A$, for all formulas $B$ 
of \LPif, $\extLCon B$ iff $\clLCon B$;
\item[(d)]
\emph{strongly maximal absolute paraconsistency}:
for all propositional logics $\mathcal{L}$ with the same logical 
constants and connectives as \LPif\ and a consequence relation 
$\extLCon$ such that ${\LCon} \subset {\extLCon}$, $\mathcal{L}$ is not 
paraconsistent;
\item[(e)]
\emph{internalized notion of consistency}: $A$ is consistent iff
${} \LCon (A \Impl \False) \Or (\Not A \Impl \False)$;
\item[(f)]
\emph{internalized notion of logical equivalence}: $A \LEqv B$ iff
${} \LCon (A \BImpl B) \And (\Not A \BImpl \Not B)$.
\end{list}

Properties~(a)--(f) have been mentioned relatively often in the 
literature (see e.g.~\cite{AA15a,AAZ11b,AAZ11a,Avr99a,BC04a,CCM07a}).
Properties~(a), (b$_1$), (c), and~(d) make \LPif\ an ideal 
paraconsistent logic in the sense made precise in~\cite{AAZ11b}.
By property~(e), \LPif\ is also a logic of formal inconsistency 
according to Definition~23 in~\cite{CCM07a}.%
\footnote
{The set of formulas $\bigcirc(p)$ witnessing this is 
 $\set{(p \Impl \False) \Or (\Not p \Impl \False)}$.}

Properties~(a)--(c) indicate that \LPif\ retains much of classical 
propositional logic.
In~\cite{BM15b}, properties~(e) and~(f) are considered desirable and 
essential, respectively, for a paraconsistent propositional logic on 
which a process algebra that allows for dealing with contradictory 
states is built.

From Theorem~4.42 in~\cite{AA15a}, it is known that there are exactly 
8192 different three-valued paraconsistent propositional logics with 
properties~(a) and~(b).
From Theorem~2 in~\cite{AAZ11b}, it is known that properties~(c) and~(d) 
are common properties of all three-valued paraconsistent propositional 
logics with properties~(a) and~(b$_1$).
From Fact~103 in~\cite{CCM07a}, it is known that property~(f) is a 
common property of all three-valued paraconsistent propositional logics 
with properties~(a), (b) and~(e).
Moreover, it is easy to see that property~(e) is a common property 
of all three-valued paraconsistent propositional logics with 
properties~(a) and~(b).
Hence, each three-valued paraconsistent propositional logic 
with properties~(a) and~(b) has properties~(c)--(f) as well.

From Corollary~4.74 in~\cite{AA15a}, it is known that \LPif\ is the 
strongest three-valued paraconsistent propositional logic with 
property~(a) in the sense that each three-valued paraconsistent 
propositional logic with property~(a) can be embedded in \LPif.

\section{Characterizing \LPif\ by Laws of Logical Equivalence}
\label{sect-characterization}

There are exactly 8192 different three-valued paraconsistent 
propositional logics with properties~(a) and~(b).
This means that these properties, which concern the logical consequence 
relation of a logic, have little discriminating power.
Properties~(c)--(f), which also concern the logical consequence relation 
of a logic, do not offer additional discrimination because each of the 
8192 three-valued paraconsistent propositional logics with 
properties~(a) and~(b) has these properties as well.

In this section, properties concerning the logical equivalence relation 
of a logic are used for additional discrimination.
It turns out that 11 classical laws of logical equivalence, of which at 
least 9 are considered to belong to the most basic ones, are sufficient 
to distinguish \LPif\ completely from the other 8191 three-valued 
paraconsistent propositional logics with properties~(a) and~(b).

The logical equivalence relation of \LPif\ satisfies all laws given in 
Table~\ref{laws-lequiv}.
\begin{table}[!tb]
\caption{Distinguishing laws of logical equivalence for \LPif}
\label{laws-lequiv}
\begin{eqntbl}
\begin{neqncol}
(1)  & A \And \False \LEqv \False \\
(3)  & A \And \True \LEqv A \\
(5)  & A \And A \LEqv A \\
(7)  & A \And B \LEqv B \And A \\
(9)  & \Not \Not A \LEqv A 
\end{neqncol}
\qquad\;
\begin{neqncol}
(2)  & A \Or \True \LEqv \True \\
(4)  & A \Or \False \LEqv A \\
(6)  & A \Or A \LEqv A \\
(8)  & A \Or B \LEqv B \Or A \\
(10) & \False \Impl A \LEqv \True \\
(11) & (A \Or \Not A) \Impl B \LEqv B 
\end{neqncol}
\end{eqntbl}
\end{table}
\begin{theorem}
\label{theorem-soundness}
The logical equivalence relation of \LPif\ satisfies laws (1)--(11) from 
Table~\ref{laws-lequiv}.
\end{theorem}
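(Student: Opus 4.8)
The plan is to use the truth-functionality of LP$^{\IImpl,\False}$ recorded just after the definition of a valuation: every connective denotes a function on $\{\true,\false,\both\}$, so $A \LEqv B$ holds exactly when the term functions denoted by $A$ and $B$ coincide, where the meta-variables $A$, $B$, $C$ are read as the arguments. Since each of laws (1)--(12) contains at most three meta-variables, it therefore suffices, for each law separately, to check that both sides receive the same value under each of the at most $3^3 = 27$ assignments of values in $\{\true,\false,\both\}$ to $A$, $B$, $C$ (three or nine cases for the laws with one or two meta-variables). Thus the theorem reduces to a bounded computation, which I would organise in two groups.

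For laws (1)--(9), which mention only $\False$, $\True$, $\Not$, $\CAnd$, and $\COr$, I would first read off from the defining clauses that $\val{A \CAnd B}{\nu} = \min(\val{A}{\nu},\val{B}{\nu})$ and $\val{A \COr B}{\nu} = \max(\val{A}{\nu},\val{B}{\nu})$ with respect to the chain $\false < \both < \true$, that $\val{\False}{\nu} = \false$ is its bottom while $\val{\True}{\nu} = \val{\Not \False}{\nu} = \true$ is its top, and that $\Not$ denotes the involution interchanging $\true$ and $\false$ and fixing $\both$. Laws (1) and (3) are then the absorbing-element and identity-element laws for $\min$ with bottom $\false$ and top $\true$; laws (2) and (4) are the dual laws for $\max$; laws (5) and (6) are idempotence of $\min$ and $\max$; laws (7) and (8) are their commutativity; and law (9) holds because $\Not \Not$ denotes the identity, $\Not$ being an involution.

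For laws (10)--(12), which also involve the implication clause $\val{A \IImpl B}{\nu} = \true$ if $\val{A}{\nu} = \false$ and $\val{A \IImpl B}{\nu} = \val{B}{\nu}$ otherwise, I would split on whether the relevant antecedents take the value $\false$. For law (10) the point is that $A \COr \Not A$ is never assigned $\false$: if $\val{A}{\nu} \in \{\true,\false\}$ then $\val{A \COr \Not A}{\nu} = \true$, and if $\val{A}{\nu} = \both$ then $\val{A \COr \Not A}{\nu} = \both$; hence the implication clause always yields $\val{(A \COr \Not A) \IImpl B}{\nu} = \val{B}{\nu}$, which is law (10). For law (11) I would case on $\val{A}{\nu}$: if $\val{A}{\nu} = \false$ both sides evaluate to $\true$, and if $\val{A}{\nu} \neq \false$ both sides evaluate to $\min(\val{B}{\nu},\val{C}{\nu})$. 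For law (12) I would case on the pair $(\val{A}{\nu},\val{B}{\nu})$ according to how many of its components are $\false$: if both are $\false$, both sides are $\true$; if neither is $\false$, both sides are $\val{C}{\nu}$; and in the remaining subcase, where exactly one of $A$, $B$ is assigned $\false$, the disjunction $A \COr B$ is still assigned a value other than $\false$, so once more both sides reduce to $\val{C}{\nu}$.

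No step here is genuinely difficult, since the whole argument is a finite verification. The only place calling for a little care is laws (10)--(12): one has to notice the structural fact that $A \COr \Not A$ is never assigned $\false$, so that (10) collapses to the identity law, and, in (12), one must remember to treat the mixed subcase in which exactly one of the two antecedents is assigned $\false$ separately from the two ``pure'' subcases.
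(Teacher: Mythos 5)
Your proposal is correct and takes essentially the same approach as the paper, which simply verifies each law by constructing truth tables for both sides; your min/max reading of conjunction and disjunction and the observation that $A \COr \Not A$ is never assigned $\false$ are just efficient ways of organising that same finite verification.
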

\begin{proof}
The proof is easy by constructing, for each of the laws concerned, truth 
tables for both sides.
\qed
\end{proof}
Moreover, among the 8192 three-valued paraconsistent propositional 
logics with properties~(a) and~(b), \LPif\ is the only one whose logical 
equivalence relation satisfies all laws given in 
Table~\ref{laws-lequiv}.
\begin{theorem}
\label{theorem-uniqueness}
There is exactly one three-valued paraconsistent propositional logic 
with properties~(a) and~(b) of which the logical equivalence relation 
satisfies laws (1)--(11) from Table~\ref{laws-lequiv}.
\end{theorem}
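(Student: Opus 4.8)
The plan is to exploit the explicit description of the $8192$ logics that underlies Theorem~4.42 of~\cite{AA15a}. I would first recall that each of these logics is presented by a truth-functional three-valued matrix over $\{\true,\false,\both\}$ with designated set $\{\true,\both\}$, in which $\False$ denotes $\false$, in which $\Not,\CAnd,\COr,\IImpl$ restrict to the classical truth tables on arguments from $\{\true,\false\}$, in which (by properties~(b$_1$)--(b$_3$)) $x \CAnd y$ is designated iff $x$ and $y$ are both designated, $x \COr y$ is designated iff $x$ or $y$ is designated, and $x \IImpl y$ is undesignated iff $x$ is designated while $y$ is not, and in which (by paraconsistency) $\Not\both$ is designated. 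Since $\false$ is the only undesignated value, the only data not yet fixed is a choice between $\true$ and $\both$ in each matrix entry that the above forces to be designated but that the classical part does not already pin to $\true$: one such entry for $\Not$ (namely $\Not\both$), three for $\CAnd$ ($\true\CAnd\both$, $\both\CAnd\true$, $\both\CAnd\both$), five for $\COr$ ($\true\COr\both$, $\both\COr\true$, $\both\COr\false$, $\false\COr\both$, $\both\COr\both$), and four for $\IImpl$ ($\false\IImpl\both$, $\true\IImpl\both$, $\both\IImpl\true$, $\both\IImpl\both$). This gives $1+3+5+4 = 13$ independent binary parameters, which recovers the number $8192 = 2^{13}$; the task is then to show that laws~(1)--(12) force each of these $13$ parameters to its value in the matrix of LP$^{\IImpl,\False}$.

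I would carry this out by substituting the parametrized tables into the laws of Table~\ref{laws-lequiv} and eliminating the parameters connective by connective. Instantiating $A$ to a formula whose value can be $\both$, law~(3) yields $\both\CAnd\true = \both$, law~(5) yields $\both\CAnd\both = \both$, and law~(7) then forces $\true\CAnd\both = \both$, so laws~(3), (5), (7) pin all three $\CAnd$-parameters. Symmetrically, law~(2) yields $\both\COr\true = \true$, law~(4) yields $\both\COr\false = \both$, law~(6) yields $\both\COr\both = \both$, and law~(8) then forces $\true\COr\both = \true$ and $\false\COr\both = \both$, so laws~(2), (4), (6), (8) pin all five $\COr$-parameters. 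Law~(9) pins $\Not\both = \both$: the only alternative, $\Not\both = \true$, makes $\Not\Not\both$ equal to $\false$ instead of $\both$. (Law~(1) imposes no constraint, since $\both\CAnd\false = \false$ is forced anyway, and is therefore not needed.)

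With $\Not$, $\CAnd$, and $\COr$ fixed, the formula $A \COr \Not A$ takes exactly the values $\true$ (when $A$ has value $\true$ or $\false$) and $\both$ (when $A$ has value $\both$) as the valuation ranges over all valuations; hence law~(10) is equivalent to the identities $\true \IImpl y = y$ and $\both \IImpl y = y$ holding for every value $y$, which pins $\true\IImpl\both = \both$, $\both\IImpl\true = \true$, and $\both\IImpl\both = \both$. The last parameter, $\false\IImpl\both$, is pinned to $\true$ by instantiating law~(11) with $A$, $B$, $C$ of values $\false$, $\both$, $\false$ respectively: the right-hand side then equals $\false\IImpl(\both\CAnd\false) = \false\IImpl\false = \true$, while the left-hand side equals $(\false\IImpl\both)\CAnd(\false\IImpl\false) = (\false\IImpl\both)\CAnd\true = \false\IImpl\both$, forcing $\false\IImpl\both = \true$ (law~(12) is then automatically satisfied and could have played the same role as law~(11)). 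All $13$ parameters now agree with the values read off the valuation clauses of LP$^{\IImpl,\False}$, so the matrix is exactly that of LP$^{\IImpl,\False}$; since the logical equivalence relation and the logical consequence relation of a truth-functional matrix are both determined by the matrix, the logic is LP$^{\IImpl,\False}$. Combined with Theorem~\ref{theorem-soundness}, this yields the claim.

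The main obstacle I anticipate is the first step: establishing from properties~(a) and~(b) that the matrix of any of the $8192$ logics has exactly the stated coarse behaviour, and that the residual freedom is precisely the thirteen listed binary choices --- that none of them is already forced and that no two of them are forced to coincide --- so that the parametrization genuinely matches the count of~\cite{AA15a}. Once that bookkeeping is secured, what remains is a routine finite verification, the only delicate point being the order of elimination: the laws used to fix $\IImpl$ refer to $\Not$, $\CAnd$, and $\COr$, so those connectives must be pinned first.
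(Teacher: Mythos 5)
Your proof is correct and takes essentially the same route as the paper's: the paper likewise reduces the theorem to the non-deterministic truth tables of~\cite{AAZ11b} (your thirteen binary parameters are exactly its $8 \times 32 \times 2 \times 16$ count) and then declares the connective-by-connective case analysis routine, which you have simply carried out in full. One parenthetical aside is wrong, though harmless: law~(12) could \emph{not} have replaced law~(11) in pinning $\false \IImpl \both$, because once $\Not$, $\CAnd$, $\COr$ and the other three implication entries are fixed, every instance of law~(12) in which the entry $\false \IImpl \both$ occurs evaluates to the same value on both sides whether that entry is $\true$ or $\both$ (it appears only as $(\false \IImpl \both) \CAnd \both$ with right-hand side $\both$, or as $(\false \IImpl \both) \CAnd (\false \IImpl \both)$ with right-hand side $\false \IImpl \both$), so law~(11) is genuinely needed there.
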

\begin{proof}
From Theorem~4.42 in~\cite{AA15a}, it is known that the 8192 
three-valued paraconsistent propositional logics with properties~(a) 
and~(b) are induced by a matrix of which the set of truth values is 
$\set{\true,\false,\both}$, the set of designated values is 
$\set{\true,\both}$, and the functions $\mAnd$, $\mOr$, $\mImpl$, 
and $\mNot$ on the set of truth values that correspond to the 
connectives $\And$, $\Or$, $\Impl$, and $\Not$, respectively, are 
such that, for each $b \in \set{\true,\false,\both}$:
\begin{cdispl}
\begin{eqncol}
\mAnd(\true,\true) = \true\;, \\
\mAnd(\false,b) = \false\;, \\
\mAnd(b,\false) = \false\;, \\
\mAnd(\both,\true) \in \set{\true,\both}\;, \\
\mAnd(\true,\both) \in \set{\true,\both}\;, \\
\mAnd(\both,\both) \in \set{\true,\both}\;, 
\end{eqncol}
\qquad
\begin{eqncol}
\mOr(\true,\true) = \true\;, \\
\mOr(\false,\true) = \true\;, \\
\mOr(\true,\false) = \true\;, \\
\mOr(\false,\false) = \false\;, \\
\mOr(\both,b) \in \set{\true,\both}\;, \\
\mOr(b,\both) \in \set{\true,\both}\;, 
\end{eqncol}
\qquad
\begin{eqncol}
\mImpl(\true,\true) = \true\;, \\
\mImpl(\false,\true) = \true\;, \\
\mImpl(\true,\false) = \false\;, \\
\mImpl(\false,\false) = \true\;, \\
\mImpl(\both,\false) = \false\;, \\
\mImpl(\both,\true) \in \set{\true,\both}\;, \\
\mImpl(b,\both) \in \set{\true,\both}\;, 
\end{eqncol}
\qquad
\begin{eqncol}
\mNot(\true) = \false\;, \\
\mNot(\false) = \true\;, \\
\mNot(\both) \in \set{\true,\both}\;.
\end{eqncol}
\end{cdispl}%
So, there are 8 alternatives for $\mAnd$, 32 alternatives for $\mOr$,
16 alternatives for $\mImpl$, and 2 alternatives for $\mNot$.
Below, it will be shown that, for each of these functions, laws from 
Table~\ref{laws-lequiv} exclude all but one alternative.

Law~(3) excludes $\mAnd(\both,\true) = \true$,
law~(5) excludes $\mAnd(\both,\both) = \true$, and
law~(7) excludes $\mAnd(\true,\both) = \true$.
Hence, there is only one of the 8 alternatives for $\mAnd$ left.
Law~(2) excludes $\mOr(\both,\true) = \both$,
law~(4) excludes $\mOr(\both,\false) = \true$,
law~(6) excludes $\mOr(\both,\both) = \true$, and
law~(8) excludes $\mOr(\true,\both) = \both$ and 
$\mOr(\false,\both) = \true$.
Hence, there is only one of the 32 alternatives for $\mOr$ left.
Law~(9) excludes $\mNot(\both) = \true$.
Hence, there is only one of the 2 alternatives for $\mNot$ left.
Law~(10) excludes $\mImpl(\false,\both) = \both$ and 
law~(11) excludes $\mImpl(\both,\true) = \both$,
$\mImpl(\true,\both) = \true$, and $\mImpl(\both,\both) = \true$
(in the case of the alternatives left for $\mOr$ and $\mNot$).
Hence, there is only one of the 16 alternatives for $\mImpl$ left.
\qed
\end{proof}

The following is a clarifying reformulation of the conditions on the 
functions $\mAnd$, $\mOr$, $\mImpl$, and $\mNot$ of the matrices from
the proof of Theorem~\ref{theorem-uniqueness}:
\begin{itemize}
\item
for each $\tilde{\diamond} \in \set{\mAnd,\mOr,\mImpl,\mNot}$,
the restriction of $\tilde{\diamond}$ to $\set{\true,\false}$ is
$\tilde{\diamond}$ from the matrix for classical propositional logic;
\item
for each $\tilde{\diamond} \in \set{\mAnd,\mOr,\mImpl}$,
for each $b \in \set{\true,\false}$:
\begin{cdispl}
\begin{tabular}{l}
$\tilde{\diamond}(b,\both) \neq \false$ iff
$\tilde{\diamond}(b,\true) = \true$,
\\
$\tilde{\diamond}(\both,b) \neq \false$ iff
$\tilde{\diamond}(\true,b) = \true$,
\\
$\tilde{\diamond}(\both,\both) \neq \false$ iff
$\tilde{\diamond}(\true,\true) = \true$;
\end{tabular}
\end{cdispl}
\item
$\mNot(\both) \neq \false$. 
\end{itemize}
This reformulation shows clearly that $\both$ is just an alternative for 
$\true$ in the cases of $\mAnd$, $\mOr$, and $\mImpl$, but not in the 
case of $\mNot$.

It follows immediately from the proof of 
Theorem~\ref{theorem-uniqueness} that all proper subsets of laws 
(2)--(11) from Table~\ref{laws-lequiv} are insufficient to distinguish 
\LPif\ completely from the other three-valued paraconsistent 
propositional logics with properties~(a) and~(b).
Notice that the logical equivalence relation of every three-valued 
paraconsistent propositional logics with properties~(a) and~(b) 
satisfies law~(1) from Table~\ref{laws-lequiv}.
The next two corollaries also follow immediately from the proof of 
Theorem~\ref{theorem-uniqueness}.
\begin{corollary}
\label{corollary-exactly-one}
There are exactly 16 three-valued paraconsistent propositional logics 
with properties~(a) and~(b) of which the logical equivalence relation 
satisfies laws (1)--(9) from Table~\ref{laws-lequiv}.
\end{corollary}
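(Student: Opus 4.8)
The plan is to read off the result directly from the analysis carried out in the proof of Theorem~\ref{theorem-uniqueness}. Recall that the eight laws not involving implication, namely (1)--(9), split cleanly according to which connective they constrain: laws~(1), (3), (5), (7) involve only conjunction, the falsity constant, and truth (and truth is definable from falsity and negation); laws~(2), (4), (6), (8) involve only disjunction, falsity, and truth; and law~(9) involves only negation. Since the non-deterministic truth tables of~\cite{AAZ11b} determine the interpretation of $\CAnd$, $\COr$, and $\Not$ independently, the count of admissible logics factors as a product of the counts for each connective.

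First I would invoke the case analyses already established: among the 8 ordinary truth tables for conjunction represented by its non-deterministic truth table, exactly one is compatible with laws~(1), (3), (5), (7); among the 32 for disjunction, exactly one is compatible with laws~(2), (4), (6), (8); and among the 2 for negation, exactly one is compatible with law~(9). Hence, for a three-valued paraconsistent propositional logic with properties~(a) and~(b), requiring laws~(1)--(9) fixes the truth tables of $\CAnd$, $\COr$, and $\Not$ completely, while leaving the truth table of $\IImpl$ entirely unconstrained. Since the non-deterministic truth table for implication represents 16 ordinary truth tables, each of which combines with the now-fixed tables for the other connectives to yield one of the 8192 logics, there are exactly $16$ such logics.

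The only point that needs care — though it is immediate from the way the 8192 figure arises in~\cite{AAZ11b,AA15a} — is that the non-deterministic truth tables for the four connectives are genuinely independent, so that the total count $8192 = 8 \cdot 32 \cdot 2 \cdot 16$ really does factor as claimed and every combination of admissible single-valued refinements yields a distinct logic with properties~(a) and~(b). Granting this, the corollary is just the arithmetic $8 \cdot 32 \cdot 2 \cdot 16 / (8 \cdot 32 \cdot 2) = 16$; no new case analysis is required beyond what the proof of Theorem~\ref{theorem-uniqueness} already supplies. I do not expect any genuine obstacle here: the substantive work is the routine verification already referenced, and the corollary merely records the intermediate tally reached before laws~(10)--(12) are imposed.
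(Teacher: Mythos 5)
Your proposal is correct and takes essentially the same route as the paper, which derives this corollary directly from the case analyses in the proof of Theorem~\ref{theorem-uniqueness}: laws (1)--(9) fix the truth tables for conjunction, disjunction, and negation uniquely while leaving all 16 refinements of the non-deterministic truth table for implication available, yielding $16$ logics. Your explicit remark that the count factors because the non-deterministic truth tables for the four connectives are independent is a point the paper leaves implicit, but it is consistent with how the figure $8192 = 8 \cdot 32 \cdot 2 \cdot 16$ arises there.
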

\begin{corollary}
\label{corollary-exactly-32}
There are exactly 32 three-valued paraconsistent propositional logics 
with properties~(a) and~(b) of which the logical equivalence relation 
satisfies laws (1)--(8) from Table~\ref{laws-lequiv}.
\end{corollary}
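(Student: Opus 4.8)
The plan is to exploit the product structure underlying the count of $8192$ that is established in~\cite{AAZ11b} and already used in the proof of Theorem~\ref{theorem-uniqueness}: a three-valued paraconsistent propositional logic with properties~(a) and~(b) is fixed by an \emph{independent} choice, for each of the four connectives, of one ordinary truth table among those represented by the non-deterministic truth table for that connective, and the numbers of such tables are $8$ for conjunction, $32$ for disjunction, $2$ for negation, and $16$ for implication (note $8 \cdot 32 \cdot 2 \cdot 16 = 8192$). Thus the task is reduced to determining, for each connective separately, how many of its admissible ordinary truth tables are compatible with the laws in question.

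First I would observe that laws~(1), (3), (5), and~(7) mention only conjunction together with the constants $\False$ and $\True$, that law~(3) uses $\True$ solely as the abbreviation $\Not \False$, and that $\val{\Not \False}{\nu} = \true$ for every valuation $\nu$ regardless of the chosen negation table (negation extends classical negation and $\val{\False}{\nu} = \false$). Hence these four laws constrain only the conjunction table, and the short case analysis from the proof of Theorem~\ref{theorem-uniqueness} already shows that exactly one of the $8$ conjunction tables satisfies all of them. Symmetrically, laws~(2), (4), (6), and~(8) mention only disjunction and the constants, and exactly one of the $32$ disjunction tables satisfies all four. I would then note that laws~(1)--(8) contain no occurrence of negation other than inside $\True$ and no occurrence of implication at all, so they place no restriction whatsoever on the negation table or on the implication table.

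Combining these observations, the logics with properties~(a) and~(b) whose logical equivalence relation satisfies laws~(1)--(8) are exactly those obtained by taking the unique admissible conjunction table, the unique admissible disjunction table, any of the $2$ negation tables, and any of the $16$ implication tables; since these choices are independent and each combination yields a distinct logic of the required kind, there are $1 \cdot 1 \cdot 2 \cdot 16 = 32$ of them. The only point that needs care is the independence/exhaustiveness claim, namely that every per-connective assignment of an admissible ordinary truth table does produce one of the original $8192$ logics, so that nothing is double-counted and nothing extraneous is introduced; but this is precisely what the non-deterministic characterization of~\cite{AAZ11b} guarantees, and it is the same fact that already underpins Theorem~\ref{theorem-uniqueness}, so no new argument is needed. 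Accordingly I expect no genuine obstacle here: the corollary is essentially a bookkeeping consequence of the decomposition, the only thing to be vigilant about being the bookkeeping of which of laws~(1)--(8) touches which connective.
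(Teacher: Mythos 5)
Your proposal is correct and follows essentially the same route as the paper: the corollary is stated there to follow immediately from the proof of Theorem~\ref{theorem-uniqueness}, which establishes exactly the per-connective decomposition you use (one admissible conjunction table, one admissible disjunction table, with the $2$ negation and $16$ implication tables left unconstrained by laws (1)--(8), giving $2 \cdot 16 = 32$). Your explicit check that $\True = \Not\False$ evaluates to $\true$ under every admissible negation table, so that law~(3) does not secretly constrain negation, is a point the paper leaves implicit but is handled correctly.
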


From a paraconsistent propositional logic with properties~(a) and~(b), 
it is only to be expected, because of paraconsistency and 
property~(b$_1$), that its negation connective and its implication 
connective deviate clearly from their counterpart in classical 
propositional logic.
Corollary~\ref{corollary-exactly-32} shows that, among the 8192 
three-valued paraconsistent propositional logics with properties~(a) 
and~(b), there are 8160 logics whose logical equivalence relation does 
not satisfy the identity, annihilation, idempotent, and commutative 
laws for conjunction and disjunction (laws (1)--(8) from 
Table~\ref{laws-lequiv}).

\section{More on the Logical Equivalence Relation of \LPif}
\label{sect-logical-equiv}

It turns out that the logical equivalence relation of \LPif\ does not 
only satisfy the identity, annihilation, idempotent, and commutative 
laws for conjunction and disjunction but also other basic classical laws 
for conjunction and disjunction, including the absorption, associative, 
distributive, and de~Morgan's laws.
Actually, the logical equivalence relation of \LPif\ also satisfies all 
laws given in Table~\ref{laws-lequiv-more}.
\begin{table}[!tb]
\caption{Additional laws of logical equivalence for \LPif}
\label{laws-lequiv-more}
\begin{eqntbl}
\begin{neqncol}
(12) & A \And (A \Or B) \LEqv A \\
(14) & (A \And B) \And C \LEqv A \And (B \And C) \\
(16) & A \And (B \Or C) \LEqv (A \And B) \Or (A \And C) \\
(18) & \Not (A \And B) \LEqv \Not A \Or \Not B \\
(20) & (A \And \Not A) \And (B \Or \Not B) \LEqv (A \And \Not A) \\
(22) & (A \Impl B) \And (A \Impl C) \LEqv A \Impl (B \And C) \\
(24) & A \Impl (B \Impl C) \LEqv (A \And B) \Impl C
\end{neqncol}
\qquad\;
\begin{neqncol}
(13) & A \Or (A \And B) \LEqv A \\
(15) & (A \Or B) \Or C \LEqv A \Or (B \Or C) \\
(17) & A \Or (B \And C) \LEqv (A \Or B) \And (A \Or C) \\
(19) & \Not (A \Or B) \LEqv \Not A \And \Not B \\
(21) & (A \And \Not A) \Or (B \Or \Not B) \LEqv (B \Or \Not B) \\
(23) & (A \Impl C) \And (B \Impl C) \LEqv (A \Or B) \Impl C 
\end{neqncol}
\end{eqntbl}
\end{table}
\begin{theorem}
\label{theorem-soundness-more}
The logical equivalence relation of \LPif\ satisfies laws (12)--(24) 
from Table~\ref{laws-lequiv-more}.
\end{theorem}
\begin{proof}
The proof is straightforward by constructing, for each of the laws 
concerned, truth tables for both sides.
\qed
\end{proof}

Laws (1)--(9) and (12)--(21) axiomatize bounded normal 
i-lattices~\cite{Kal58a}.%
\footnote
{Bounded normal i-lattices are also (confusingly) called Kleene algebras
 (see e.g.~\cite{BM67a}).}
Laws~(10)--(11) and (22)--(24) are laws concerning the implication 
connective. 
Laws (10)--(11) and (22)--(24), like laws (1)--(9) and (12)--(21), are 
also satisfied by the logical equivalence relation of classical 
propositional logic. 
For all formulas $A'$ and $A''$ of classical propositional logic, the 
logical equivalence relation of classical propositional logic satisfies 
$A' \LEqv A''$ iff $A' \LEqv A''$ follows from laws (1)--(9) and 
(12)--(21) and the laws
\begin{ldispl}
(25) \;\; A \And \Not A \LEqv \False \qquad\;  
(26) \;\; A \Or  \Not A \LEqv \True  \qquad\;
(27) \;\; A \Impl B \LEqv \Not A \Or B \;.\footnotemark   
\end{ldispl}%
\footnotetext
{This fact is easy to see because, without law (27), these laws
 axiomatize Boolean algebras and, in classical propositional logic, law
 (27) defines $\Impl$ in terms of $\Or$ and $\Not$.}%
However, laws (25)--(27) are not satisfied by the logical equivalence 
relation of \LPif. 

The fact that $A \LEqv B$ iff
${} \LCon (A \BImpl B) \And (\Not A \BImpl \Not B)$
suggests that \LPif\ is Blok-Pigozzi algebraizable~\cite{BP89a}.
\begin{theorem}
\label{theorem-algebraizable}
\LPif\ is finitely Blok-Pigozzi algebraizable with the equivalence 
formulas
$\set
  {p \Impl q, q \Impl p, \Not p \Impl \Not q, \Not q \Impl \Not p}$
and the single defining equation $p = p \Or \Not p$ ($p$ and $q$ are 
propositional variables).
\end{theorem}
\begin{proof}
Because 
$A \LEqv B$ iff $\,\LCon (A \BImpl B) \And (\Not A \BImpl \Not B)$,
$A \BImpl B$ stands for $(A \Impl B) \And\linebreak[2] (B \Impl A)$, 
and $\LCon A \And B$ iff $\LCon A$ and $\LCon B$, it is the case that
$A \LEqv B$ iff $\LCon A \Impl B$ and $\LCon B \Impl A$ and
$\LCon \Not A \Impl \Not B$ and $\LCon \Not B \Impl \Not A$. 
Moreover, writing 
$A \ILEqv B$ for $(A \BImpl B) \And (\Not A \BImpl \Not B)$, 
it is easily found that
(i)~$\LCon A \ILEqv A$, 
(ii)~$A, A \ILEqv B \LCon B$,
(iii)~$A \ILEqv B \LCon \Not A \ILEqv \Not B$,  
(iv)~$A \ILEqv B, A' \ILEqv B' \LCon A \diamond A' \ILEqv B \diamond B'$
     for $\diamond \in \set{\And, \Or, \Impl}$, and
(v)~$A \LCon A \ILEqv A \Or \Not A$ and 
    $A \ILEqv A \Or \Not A \LCon A$.
Therefore, by Corollary~3.6 from~\cite{Her96a} and the fact that
\LPif\ is a finitary logic, \LPif\ is Blok-Pigozzi algebraizable%
\footnote
{In~\cite{Her96a}, Blok-Pigozzi algebraizable is called finitely
 algebraizable.}
with the equivalence formulas
$\set
  {p \Impl q, q \Impl p, \Not p \Impl \Not q, \Not q \Impl \Not p}$
and the single defining equation $p = p \Or \Not p$.
\qed
\end{proof}
The algebraization concerned is the quasi-variety generated by the 
expansion of the 3-element bounded normal i-lattice obtained by adding 
the unique binary operation $\Impl$ that satisfies 
$\False \Impl p = \True$ and $(p \Or \Not p) \Impl q = q$. 

\section{The Paracomplete Analogue of \LPif}
\label{sect-dual}

In this section, the paracomplete analogue of \LPif\ is introduced.
In Section~\ref{sect-dual-properties}, properties of this logic are 
presented that are comparable to properties of \LPif\ that have been 
presented in the preceding sections.

Replacing the axiom schema $A \Or \Not A$ by the axiom schema
$\Not A \Impl (A \Impl B)$  in the given Hilbert-style deductive 
system for \LPif\ yields a Hilbert-style deductive system for Kleene's 
strong three-valued logic introduced in Section~64 of~\cite{Kle52a} with 
its implication connective replaced by an implication connective for 
which the standard deduction theorem holds and enriched with a falsity 
constant.
The name \KLif\ is used to denote this logic.
It is perhaps clarifying that the axiom schemas involved in the 
above-mentioned replacement can be paraphrased as ``$A$ or $\Not A$ 
follows from anything'' and ``anything follows from $A$ and $\Not A$'', 
respectively.
Virtually all differences between \LPif\ and \KLif\ can be traced to the 
fact that the third truth value $\both$ is interpreted as both true and 
false in the former logic and as neither true nor false in the latter 
logic.

Like in the case of \LPif, meanings are assigned to the formulas of 
\KLif\ by means of valuations that are functions from the set of all 
formulas of \KLif\ to the set $\{\true,\false,\both\}$.
The conditions that a valuation for \KLif\ must satisfy differ from the 
conditions that a valuation for \LPif\ must satisfy only with respect to 
implication:
\begin{eqnarray*}
\val{A \Impl B}{\nu} & = &
 \left \{
 \begin{array}{l@{\;\;}l}
 \val{B}{\nu} & \mathrm{if}\; \val{A}{\nu} = \true \\
 \true        & \mathrm{otherwise}.
 \end{array}
 \right.
\end{eqnarray*}

The conditions that a valuation for \KLif\ must satisfy uniquely 
characterize the set of valuations induced by the truth value that 
naturally corresponds to the falsity constant and the functions on 
the set of truth values that correspond to the different connectives 
according to Section~3.1 of~\cite{Avr91a}.
The functions concerned, except the function that corresponds to the 
implication connective, are the same as the ones that are represented by 
the truth tables presented in Section~64 of~\cite{Kle52a}.

The symbol $\Der^\star$ is used to denote the derivability relation 
induced by the axiom schemas and inference rule of the deductive system
for \KLif\ referred to above.
The logical consequence relation of \KLif, denoted by $\LCon^\star$, is 
based on the idea that a valuation $\nu$ satisfies a formula $A$ if 
$\val{A}{\nu} = \true$.
It is defined as follows: $\Gamma \LCon^\star A$ iff for every 
valuation $\nu$, either $\val{A'}{\nu} \in \{\false,\both\}$ for some 
$A' \in \Gamma$ or $\val{A}{\nu} = \true$.
The Hilbert-style deductive system for \KLif\ referred to above is sound 
and strongly complete with respect to the semantics of \KLif, i.e.\ 
$\Gamma \Der^\star A$ iff $\Gamma \LCon^\star A$.
This is proved for the fragment of \KLif\ without the falsity constant 
in Section~4 of~\cite{Avr91a}. 
The proof concerned easily generalizes to full \KLif.

It follows immediately from the definition of $\LCon^\star$ that, for 
all formulas $A$ and $B$ of \KLif, $A, \Not A \LCon^\star B$.
Hence, \KLif\ is not a paraconsistent propositional logic.
\KLif\ is a paracomplete propositional logic instead.

A propositional logic $\mathcal{L}$ is a \emph{paracomplete} 
propositional logic if 
(a)~its logical consequence relation $\LCon_{\mathcal{L}}$ satisfies the 
condition that there exists a formula $A$ of $\mathcal{L}$ such that 
$\nLCon_{\mathcal{L}} A \Or \Not A$ and
(b)~its negation connective $\Not$ satisfies the condition that, 
for each propositional variable $p$, both 
$p \nLCon_{\mathcal{L}} \Not p$ and $\Not p \nLCon_{\mathcal{L}} p$.

The logical equivalence relation $\LEqv$ of \KLif\ is defined as it is 
defined for \LPif\ and classical propositional logic: $A \LEqv B$ iff 
for every valuation $\nu$, $\val{A}{\nu} = \val{B}{\nu}$.
Definedness of a formula of \KLif\ is defined as consistency of a 
formula of \LPif\ is defined: $A$ is \emph{defined} iff for every 
valuation $\nu$, $\val{A}{\nu} \neq \both$.

\KLif\ is essentially the same as the propositional fragment of LPF 
(Logic of Partial Functions)~\cite{BCJ84a,Che86a} without the constant 
representing $\both$ (cf.\ Section~3.1.2 of~\cite{Avr91a}).

It is a notable fact that \KLif\ and \LPif\ are dual in the way that 
classical logical consequences are reflected in these logics.
Below, this is made more precise.

We write $V(A)$, where $A$ is a formula of CL, for the set of all 
propositional variables occurring in $A$.

Let $p_1,\ldots,p_n$, where $n \geq 1$, be propositional variables, and 
let $A$ be a formula of CL such that $V(A) = \set{p_1,\ldots,p_n}$.
Then the \emph{inconsistency formula} for $A$, written $\iota(A)$, is 
defined by
$\iota(A) =
 (p_1 \And \Not p_1) \Or \ldots \Or (p_n \And \Not p_n)$ 
and the \emph{definedness formula} for $A$, written $\delta(A)$, is 
defined by
$\delta(A) =
 (p_1 \Or \Not p_1) \And \ldots \And (p_n \Or \Not p_n)$.
The convention is used that $\iota(A) = \False$ and $\delta(A) = \True$ 
if $V(A) = \emptyset$.
Clearly, $\LCon \iota(A) \BImpl \Not \delta(A)$ and 
$\LCon^\star \delta(A) \BImpl \Not \iota(A)$.

\begin{theorem}
\label{theorem-duality}
For all formulas $A$ and $B$ of CL:%
\footnote{Clearly, the formulas of CL, \LPif\ and \KLif\ are the same.} 
\begin{ldispl}
A \clLCon B \;\;\mathrm{iff}\;\; A \LCon B \Or \iota(A) \\
\phantom{A \clLCon B \;} \mathrm{and} \\ 
A \clLCon B \;\;\mathrm{iff}\;\; A \And \delta(B) \LCon^\star B\;.
\end{ldispl}%
\end{theorem}
\begin{proof}
Theorems~3.10 and~4.2 in~\cite{Bea13a} constitute a similar result for
the first-order extensions of the $\set{\Not,\And,\Or}$-fragments of
\LPif\ and \KLif\ in a setting with multiple-conclusion logical 
consequence relations.
The proofs of these theorems do not depend on the absence of $\Impl$ 
and $\False$ and can be adapted with minor changes to the propositional 
case and single-conclusion logical consequence relations.
\qed
\end{proof}
Theorem~\ref{theorem-duality} can be generalized to the case with 
multiple premises. 
However, in that case, the symmetry that shows up in the theorem gets 
lost in the single-conclusion versions of \LPif\ and \KLif\ presented in 
this paper.

\section{Properties of \KLif\ Comparable to Properties of \LPif}
\label{sect-dual-properties}

In this section, properties of \KLif\ are presented that are comparable 
to properties of \LPif\ that have been proposed as desirable properties 
of a reasonable paraconsistent propositional logic and a property of 
\KLif\ is proved that is comparable to the property of \LPif\ concerning 
the logical equivalence relation that distinguishes \LPif\ from other 
reasonable three-valued paraconsistent propositional logics.

The following properties of \KLif\ are similar to properties of \LPif\ 
mentioned in Section~\ref{sect-properties}:
\begin{list}{}
 {\setlength{\leftmargin}{2.4em} \settowidth{\labelwidth}{(b$'$)}}
\item[(a$'$)]
\emph{containment in classical logic}:
${\LCon^\star} \subseteq {\clLCon}$;
\item[(b$'$)]
\emph{proper basic connectives}:
for all sets $\Gamma$ of formulas of \KLif\ and all formulas $A$, $B$, 
and $C$ of \KLif:
\begin{list}{}
 {\setlength{\leftmargin}{2.7em} \settowidth{\labelwidth}{(b$'_3$)}}
\item[(b$'_1$)]
$\Gamma, A \LCon^\star B$\phantom{${} \And {}$}\phantom{$C$} iff 
$\Gamma \LCon^\star A \Impl B$,
\item[(b$'_2$)]
$\Gamma \LCon^\star A \And B$\phantom{,}\phantom{$C$} iff 
$\Gamma \LCon^\star A$ and $\Gamma \LCon^\star B$,
\item[(b$'_3$)]
$\Gamma,  A \Or B \LCon^\star C$ iff 
$\Gamma, A \LCon^\star C$ and $\Gamma, B \LCon^\star C$;
\end{list}
\item[(c$'$)]
\emph{weakly maximal paracompleteness relative to classical logic}:
for all formulas $A$ of \KLif\ with $\nLCon^\star A$ and $\clLCon A$, 
for the minimal consequence relation $\extLCon$ such that
${\LCon^\star} \subseteq {\extLCon}$ and $\extLCon A$, for all formulas 
$B$ of \KLif, $\extLCon B$ iff $\clLCon B$;
\item[(e$'$)]
\emph{internalized notion of definedness}: $A$ is defined iff
${}\LCon^\star \Not (A \Impl \False) \Or \Not (\Not A \Impl \False)$;
\item[(f$'$)]
\emph{internalized notion of logical equivalence}: $A \LEqv B$ iff
${} \LCon^\star (A \BImpl B) \And (\Not A \BImpl \Not B)$.
\end{list}
It is easy to see that \KLif\ has properties~(a$'$), (b$'$), (e$'$), 
and~(f$'$).
Property~(c$'$) is stated in Section~5 of~\cite{BCK99a} (where \KLif\ is 
called CLaNs).
It remains an open question whether \KLif\ has the following property:
\begin{list}{}
 {\setlength{\leftmargin}{2.4em} \settowidth{\labelwidth}{(d$'$)}}
\item[(d$'$)]
\emph{strongly maximal absolute paracompleteness}:
for all propositional logics $\mathcal{L}$ with the same logical 
constants and connectives as \KLif\ and a consequence relation 
$\extLCon$ such that ${\LCon^\star} \subset {\extLCon}$, $\mathcal{L}$ 
is not paracomplete.
\end{list}

By property~(e$'$), \KLif\ is a logic of formal undeterminedness in the 
sense made precise in~\cite{CCR19a}.

The following theorem concerns the number of three-valued paracomplete 
propositional logics with properties~(a$'$) and~(b$'$).
\begin{theorem}
\label{theorem-1024}
There are exactly 1024 three-valued paracomplete propositional logics 
with properties~(a$'$) and~(b$'$).%
\footnote
{It seems that \cite{Mar05a} refers to an unpublished document about 
 these 1024 paracomplete propositional logics.}
\end{theorem}
\begin{proof}
Consider the set of 1024 three-valued paracomplete propositional logics 
that are induced by a matrix of which the set of truth values is 
$\set{\true,\false,\both}$, the set of designated values is 
$\set{\true}$, and the functions $\mAnd$, $\mOr$, $\mImpl$, 
and $\mNot$ on the set of truth values that correspond to the 
connectives $\And$, $\Or$, $\Impl$, and $\Not$, respectively, are 
such that, for each $b \in \set{\true,\false,\both}$:
\begin{cdispl}
\begin{eqncol}
\mAnd(\true,\true) = \true\;, \\
\mAnd(\false,\true) = \false\;, \\
\mAnd(\true,\false) = \false\;, \\
\mAnd(\false,\false) = \false\;, \\
\mAnd(\both,b) \in \set{\false,\both}\;, \\
\mAnd(b,\both) \in \set{\false,\both}\;, 
\end{eqncol}
\qquad
\begin{eqncol}
\mOr(\true,b) = \true\;, \\
\mOr(b,\true) = \true\;, \\
\mOr(\false,\false) = \false\;, \\
\mOr(\both,\false) \in \set{\false,\both}\;, \\
\mOr(\false,\both) \in \set{\false,\both}\;, \\
\mOr(\both,\both) \in \set{\false,\both}\;, 
\end{eqncol}
\qquad
\begin{eqncol}
\mImpl(\true,\true) = \true\;, \\
\mImpl(\true,\false) = \false\;, \\
\mImpl(\false,b) = \true\;, \\
\mImpl(\both,b) = \true\;, \\
\mImpl(\true,\both) \in \set{\false,\both}\;,
\end{eqncol}
\qquad
\begin{eqncol}
\mNot(\true) = \false\;, \\
\mNot(\false) = \true\;, \\
\mNot(\both) \in \set{\false,\both}\;.
\end{eqncol}
\end{cdispl}%
For an induced logic,
$\nu(A \diamond B) = \tilde{\diamond}(\nu(A),\nu(B))$
for each $\diamond \in \set{\And,\Or,\Impl}$ and 
$\nu(\Not A) = \mNot(\nu(A))$.
From this it follows easily that the above conditions on the functions 
$\mAnd$, $\mOr$, $\mImpl$, and $\mNot$ are equivalent to the 
condition that, for each valuation~$\nu$:
\begin{cdispl}
\begin{tabular}{l}
$\nu(A \And B) = \true$  iff $\nu(A) = \true$    and $\nu(B) = \true$,
\\ 
$\nu(A \Or B) = \true$   iff $\nu(A) = \true$    or  $\nu(B) = \true$,
\\ 
$\nu(A \Impl B) = \true$ iff $\nu(A) \neq \true$ or  $\nu(B) = \true$,
\\ 
$\nu(\Not A) = \true$\phantom{$\;B\;$} iff 
$\nu(A) \neq \true$ and $\nu(A) \neq \both$.
\end{tabular}
\end{cdispl}%
For an induced logic, $\Gamma \LCon^\star A$ iff for every valuation 
$\nu$, either $\val{A'}{\nu} \neq \true$ for some $A' \in \Gamma$ or 
$\val{A}{\nu} = \true$.
From this it follows easily that the above condition on valuations 
implies properties~(a$'$) and~(b$'$).
Moreover, it follows immediately from the above-mentioned properties of 
an induced logic, that $\mNot(\both) \in \set{\false,\both}$ iff the 
induced logic is paracomplete.

In other words, by the above conditions on the functions $\mAnd$, 
$\mOr$, $\mImpl$, and $\mNot$, the 1024 induced logics are 
three-valued paracomplete logics that have properties~(a$'$) and~(b$'$).
It is still to be shown that there are no other three-valued 
paracomplete logics that has properties~(a$'$) and~(b$'$). 

The only weakening of the above condition on valuations that still 
guarantees property (a$'$) is the one that is obtained by replacing
\begin{cdispl}
\begin{tabular}{l}
$\nu(\Not A) = \true$ iff $\nu(A) \neq \true$ and $\nu(A) \neq \both$
$\quad$ by $\quad$
$\nu(\Not A) = \true$ iff $\nu(A) \neq \true$.
\end{tabular}
\end{cdispl}%
This weakening is equivalent to the weakening of the above conditions on 
the functions $\mAnd$, $\mOr$, $\mImpl$, and $\mNot$ that is obtained
by replacing
\begin{cdispl}
\begin{tabular}{l}
$\mNot(\both) \in \set{\false,\both}$
$\quad$ by $\quad$
$\mNot(\both) \in \set{\true,\false,\both}$.
\end{tabular}
\end{cdispl}%
However, if $\mNot(\both) = \true$, then the induced logic is not
paracomplete.
\qed
\end{proof}

The following is a clarifying reformulation of the conditions on the 
functions $\mAnd$, $\mOr$, $\mImpl$, and $\mNot$ of the matrices from
the proof of Theorem~\ref{theorem-1024}:
\begin{itemize}
\item
for each $\tilde{\diamond} \in \set{\mAnd,\mOr,\mImpl,\mNot}$,
the restriction of $\tilde{\diamond}$ to $\set{\true,\false}$ is
$\tilde{\diamond}$ from the matrix for classical propositional logic;
\item
for each $\tilde{\diamond} \in \set{\mAnd,\mOr,\mImpl}$,
for each $b \in \set{\true,\false}$:
\begin{cdispl}
\begin{tabular}{l}
$\tilde{\diamond}(b,\both) \neq \true$ iff
$\tilde{\diamond}(b,\false) = \false$,
\\
$\tilde{\diamond}(\both,b) \neq \true$ iff
$\tilde{\diamond}(\false,b) = \false$,
\\
$\tilde{\diamond}(\both,\both) \neq \true$ iff
$\tilde{\diamond}(\false,\false) = \false$;
\end{tabular}
\end{cdispl}
\item
$\mNot(\both) \neq \true$. 
\end{itemize}
This reformulation shows clearly that $\both$ is just an alternative for 
$\false$ in the cases of $\mAnd$, $\mOr$, and $\mImpl$, but not in the 
case of $\mNot$.

The next theorem concerns the logical equivalence relation of \KLif.
\begin{theorem}
\label{theorem-soundness-dual}
The logical equivalence relation of \KLif\ satisfies laws (1)--(9) from 
Table~\ref{laws-lequiv} and the laws (10\/$'$) $\True \Impl A \LEqv A$ 
and (11\/$'$) $(A \And \Not A) \Impl B \LEqv \True$.
\end{theorem}
\begin{proof}
The proof is easy by constructing, for each of the laws concerned, truth 
tables for both sides.
For laws (1)--(9), this proof coincides with the proof of 
Theorem~\ref{theorem-soundness}.
\qed
\end{proof}
The 11 laws of logical equivalence from 
Theorem~\ref{theorem-soundness-dual} are sufficient to distinguish 
\KLif\ completely from the other 1023 three-valued paracomplete 
propositional logics with properties~(a$'$) and~(b$'$).
\begin{theorem}
\label{theorem-uniqueness-dual}
There is exactly one three-valued paracomplete propositional logic 
with properties~(a$'$) and~(b$'$) of which the logical equivalence 
relation satisfies laws (1)--(9) from Table~\ref{laws-lequiv} and the 
laws (10\/$'$) $\True \Impl A \LEqv A$ and 
(11\/$'$) $(A \And \Not A) \Impl B \LEqv \True$.
\end{theorem}
\begin{proof}
The proof is similar to the proof of Theorem~\ref{theorem-uniqueness}.
The matrices involved are now the 1024 matrices from the proof of 
Theorem~\ref{theorem-1024}.
In this case, there are 32 alternatives for $\mAnd$, 8 alternatives for 
$\mOr$, 2 alternatives for $\mImpl$, and 2 alternatives for~$\mNot$.
Below, it will be shown that, for each of these functions, laws from the 
ones mentioned in the theorem exclude all but one alternative.

Law~(1) excludes $\mAnd(\both,\false) = \both$,
law~(3) excludes $\mAnd(\both,\true) = \false$,
law~(5) excludes $\mAnd(\both,\both) = \false$, and
law~(7) excludes $\mAnd(\false,\both) = \both$ and 
$\mAnd(\true,\both) = \false$.
Hence, there is only one of the 32 alternatives for $\mAnd$ left.
Law~(4) excludes $\mOr(\both,\false) = \false$,
law~(6) excludes $\mOr(\both,\both) = \false$, and
law~(8) excludes $\mOr(\false,\both) = \false$.
Hence, there is only one of the 8 alternatives for $\mOr$ left.
Law~(9) excludes $\mNot(\both) = \false$.
Hence, there is only one of the 2 alternatives for $\mNot$ left.
Law (10$'$) excludes $\mImpl(\true,\both) = \false$.
Hence, there is only one of the 2 alternatives for $\mImpl$ left.
\qed
\end{proof}

\begin{theorem}
\label{theorem-soundness-more-dual}
The logical equivalence relation of \KLif\ satisfies laws (12)--(24) 
from Table~\ref{laws-lequiv-more}.
\end{theorem}
\begin{proof}
Because \LPif\ and \KLif\ have the same truth tables for conjunction, 
disjunction and negation, the logical equivalence relation of \KLif\ 
also satisfies laws~(12)--(21) from Table~\ref{laws-lequiv-more}. 
Proving that the logical equivalence relation of \KLif\ also satisfies 
laws~(22)--(24) from Table~\ref{laws-lequiv-more} is straightforward by 
constructing, for each of the laws concerned, truth tables for both 
sides.
\qed
\end{proof}

Like \LPif, \KLif\ is Blok-Pigozzi algebraizable.
\begin{theorem}
\label{theorem-algebraizable-dual}
\KLif\ is finitely Blok-Pigozzi algebraizable with the equivalence 
formulas
$\set
  {p \Impl q, q \Impl p, \Not p \Impl \Not q, \Not q \Impl \Not p}$
and the single defining equation $p = p$ ($p$ and $q$ are propositional 
variables).
\end{theorem}
\begin{proof}
The proof follows the same line as the proof of 
Theorem~\ref{theorem-algebraizable}.
\qed
\end{proof}
The algebraization concerned is the quasi-variety generated by the 
expansion of the 3-element bounded normal i-lattice obtained by adding 
the unique binary operation $\Impl$ that satisfies 
$\True \Impl p = p$ and $(p \And \Not p) \Impl q = \True$. 

\section{Concluding Remarks}
\label{sect-concl}

In this paper, properties concerning the logical equivalence relation of 
a logic are used to distinguish the logic \LPif\ from the other logics 
that belong to the 8192 three-valued paraconsistent propositional logics 
that have properties (a)--(f) from Section~\ref{sect-properties}.
These 8192 logics are regarded as potentially interesting because  
properties (a)--(f) are generally considered to be desirable properties 
of a reasonable paraconsistent propositional logic.

Properties (a)--(f) concern the logical consequence relation of a logic.
Unlike in classical propositional logic, it is not the case that 
$A \LEqv B$ iff $A \LCon B$ and $B \LCon A$ in a three-valued 
paraconsistent propositional logic.
As a consequence, the classical laws of logical equivalence that follow 
from property~(b) in classical propositional logic, viz.\ laws (1)--(8) 
and (12)--(17) from Section~\ref{sect-characterization}, do not follow 
from property~(b) in a three-valued paraconsistent propositional logic.
Therefore, if closeness to classical propositional logic is considered 
important, it should be a desirable property of a reasonable 
paraconsistent propositional logic to have a logical equivalence 
relation that satisfies laws (1)--(8) and  (12)--(17). \linebreak[2]
This would reduce the potentially interesting three-valued 
paraconsistent propositional logics from 8192 to 32.

In~\cite{BM15b}, satisfaction of laws (1)--(8), (11), (14)--(17), and 
(22)--(24) is considered essential for a paraconsistent propositional 
logic on which a process algebra that allows for dealing with 
contradictory states is built.
It follows easily from Theorem~\ref{theorem-soundness} and the proof of 
Theorem~\ref{theorem-uniqueness} that \LPif\ is one of only four 
three-valued paraconsistent propositional logics with properties (a) and 
(b) of which the logical equivalence relation satisfies laws (1)--(8), 
(11), (14)--(17), and \linebreak[2] (22)--(24).

It is also shown in this paper that, for most presented properties of 
\LPif, its paracomplete analogue \KLif\ has a comparable property.

\subsection*{Acknowledgements}

We thank three anonymous referees for carefully reading preliminary 
versions of this paper, for pointing out several flaws in it, and for 
suggesting improvements of the presentation.

\bibliographystyle{splncs03}
\bibliography{PCL}

\end{document}